\documentclass{article}
\usepackage{amsfonts}
\usepackage{epsfig}
\usepackage{amssymb,amsmath,bm}
\usepackage{subfigure}
\usepackage{amsthm}
\usepackage{mathrsfs}
\usepackage[section]{algorithm}
\usepackage{fancyhdr}
\theoremstyle{plain}
\newtheorem*{theorem*}{Theorem}

\textwidth   6.4in
\textheight  9in
\oddsidemargin  0.0in
\evensidemargin 0.0in
\topmargin 0in
\pagestyle{myheadings}

\newtheorem{tvd1}{Theorem}
\newtheorem{res}[tvd1]{Theorem}
\newtheorem{prop}[tvd1]{Proposition}
\begin{document}

\bigskip

\begin{center}

{\Large \textbf{Population-based reversible jump \\
Markov chain Monte Carlo}}

\bigskip

AJAY JASR$\textrm{A}^{\star}$, DAVID A. STEPHEN$\textrm{S}^{*}$ \& CHRISTOPHER C. HOLME$\textrm{S}^{\dagger}$\\[0pt]

\emph{${}^{\star}D$epartment of Mathematics, Imperial College London, SW7 2AZ, London, UK%
}\\[0pt]

\emph{${}^{*}D$epartment of Mathematics and Statistics, McGill University, H3A 2K6, Montreal, CA%
}\\[0pt]

\emph{${}^{\dagger}D$epartment of Statistics, University of Oxford, OX1 3TG, Oxford, UK%
}\\[0pt]
\end{center}

\begin{abstract}
In this paper we present an extension of population-based Markov
chain Monte Carlo (MCMC) to the trans-dimensional case. One of
the main challenges in MCMC-based inference is that of
simulating from high and trans-dimensional target measures.
In such cases, MCMC methods may not adequately traverse the support of the
target; the simulation results will be unreliable. We develop population methods to deal with such
problems, and give a result proving the uniform ergodicity of
these population algorithms, under mild assumptions.  
This result is used to demonstrate the superiority, in terms of
convergence rate, of a population transition kernel over a
reversible jump sampler for a Bayesian variable selection
problem. We also give an example of a population algorithm for a
Bayesian multivariate mixture model with an unknown number of
components. This is applied to gene expression data of 1000 data
points in six dimensions and it is demonstrated that our algorithm out
performs some competing Markov chain samplers.\\
\noindent {\emph{Some key words}: Population Monte
Carlo, Uniform ergodicity, Bayesian variable selection, Mixture
models, Gene expression data}
\end{abstract}

\bigskip

\section{Introduction}

The Metropolis-Hastings algorithm (Metropolis et al.~1953;
Hastings, 1970) and its adaptation to the trans-dimensional case
(Green, 1995) has provided a
method to simulate from complex probability measures in high
dimensions. This has facilitated the application of 
(particularly Bayesian)
complicated statistical models which could not otherwise be
fitted.

We consider the problem of simulating from
a probability measure $\pi(x)\lambda(dx)$ defined on measurable
space $(E,\mathcal{E})$ (where $\lambda$ is a $\sigma-$finite
measure on $\mathcal{E}$), with $\pi$ known pointwise, at least
up to a normalizing constant.  This is
achieved (in the context of MCMC) by simulating an ergodic Markov chain $\{X_n\}_{n\geq
0}$, with kernel $K:E\times\mathcal{E} \rightarrow[0,1]$, of
stationary distribution $\pi$. The Markov chain can be used, for
example, to estimate expectations of $\pi-$integrable functions.
In this
article we focus on reversible jump Markov chain Monte Carlo
(RJMCMC), where the state space is a union of subspaces of
differing dimension, that is, where
$E=\bigcup_{k\in\mathcal{K}}\big(\{k\} \times E^{k}\big)$,
$\mathcal{K}\subseteq\mathbb{N}$,
$E^{k}\subseteq\mathbb{R}^{k}$.

In statistical terms, $\pi$ will often be a Bayesian posterior
distribution which is normally known pointwise, up to a normalizing constant.
For example, applications of RJMCMC include classification and regression
(Denison et al.,~2002) and mixture modelling (Richardson and Green,
1997), with particular emphasis on model determination. However, in many
examples, the naive (vanilla) RJMCMC sampler can fail to
move around the support of the target in a feasible computation
time; see Brooks et al.~(2003) for examples.

To deal with these problems, several MCMC approaches have been
suggested, including auxiliary variable methods (Brooks et
al.,~2003) and tempered transitions (Jennison et al.,~2003):
Green (2003b) provides a recent review.  One approach used for
difficult sampling problems in fixed dimensional spaces, that
have not been widely used in the variable dimension case, relies
on population-based MCMC methods (Liang \& Wong, 2001; Liu,
2001). It is straightforward, conceptually, to extend this
approach to the variable dimension case.  Such an extension is
the focus of this paper; we study the potential theoretical and
practical advantages of population approaches over standard MCMC
methods. We remark that methods other than MCMC may be
used for difficult simulation problems, such as sequential Monte
Carlo (e.g.~Del Moral et al.,~(2006)), but such methods are not
the focus of this paper.

\subsection{Population-based Markov chain Monte Carlo}

Population-based MCMC operates by embedding the target into a
sequence of related probability measures
$\{\pi_i\}_{i\in\mathbb{T}_N}$, $\mathbb{T}_N:=\{1,\dots,N\}$,
and simulating the $N$ parallel chains (the
\textit{population}), as in parallel tempering (Geyer, 1991;
Hukushima \& Nemoto, 1996). In addition, the chains are allowed
to interact via various crossover moves; a summary is given in
Section 3 - see Liu (2001) for an extensive review.

The main advantage of population-based simulation over other
methods is the fact that the population simultaneously represents
many properties of the target distribution. This is particularly
useful in trans-dimensional simulation, where it can be
difficult to construct efficient dimension-changing proposals.
Green (2003b) notes that some MCMC methods retain information
about which states have been visited, for example, the product
space approach (Carlin \& Chib, 1995; Godsill, 2001), whilst
standard reversible jump MCMC does not retain this information.
There are advantages to both approaches.  The first approach can
provide an improved mode-jumping property; that is, the ability to jump a large
number of dimensions that would take a substantial time under
the standard approach.  The second approach has greater capacity
to discover new states that are consistent with the target.  It
is clear that an algorithm which can combine these properties is
likely to provide an improvement over both methods; the objective of this paper is to
construct such an algorithm, and to investigate its theoretical
properties.

\subsection{Contribution and Structure of Paper}

Theoretical aspects of population-based MCMC have been rarely
considered (see, however, Madras and Zheng (2003)), and
therefore the improvements, if any, offered by population
methods over standard MCMC have not been fully established.  In
this article we present a result which ensures (under fairly
mild conditions, including that the density $\pi$ is upper bounded)
the uniform ergodicity of a population transition kernel, and
allows the construction of population algorithms which are
preferable, in theory, to their single chain counterparts. The
result can help to illuminate, for small $N$, why population
algorithms can work well in practice.

We also demonstrate that a particular population-based
kernel - the \textit{exchange kernel} (see Sections
\ref{mainres} and \ref{exch}) which is fundamental to swapping
information in population MCMC - can improve convergence
properties (over parallel MCMC, in which $N$ independent
identical Markov chains are run) under strong assumptions; this has not, to our
knowledge, been established previously. 
Our results are substantiated with an example in Bayesian variable selection, and
illustrate the use of population MCMC methods in a mixture
problem.

Population methods naturally accommodate multiple MCMC
strategies which improve the ability of the sampler to mix across
the state space. In our main example we show how to combine the
methods of parallel chains (Geyer, 1991), tempering (e.g.~Geyer
\& Thompson, 1995), snooker algorithms (Gilks, et.~al., 1994),
constrained sampling (e.g.~the unpublished technical report of Atachd\'e
\& Liu (2004)) and delayed rejection (Green \& Mira, 2001).  We
believe that although such methods may not perform adequately
(individually) together they can provide a superior MCMC
sampler.

This paper is organised as follows. In Section 2 we provide an
illustrative example related to the clustering of gene
expression data via Bayesian multivariate mixture models. In
Section 3 we introduce population-based reversible jump. In
Section 4 we give some theoretical results that can indicate why
population methods can perform well. In Section 5 we present
population MCMC moves for the mixture model example. In Section
6 we provide a comparison of vanilla, simulated tempering (Geyer
\& Thompson, 1995) and population samplers for the mixture model
example. In Section 7 we conclude with a discussion, detailing
extensions to our approach.

\subsection{Notation}

The notation and mathematical objects that are adopted in the paper
are summarized here.

A measurable space is denoted $(E,\mathcal{E})$: throughout
this article $\mathcal{E}$ is a countably generated
$\sigma-$algebra. The product $\sigma-$algebra is written
$\mathcal{E}^N:=\mathcal{E}\otimes\cdots\otimes \mathcal{E}$
(product $N-$times). We use
$\delta_x(dx')$ to represent Dirac measure

For two probability measures, $\lambda_1$ and $\lambda_2$, on
$\mathcal{E}$, the total variation distance is written
$\|\lambda_1-\lambda_2\|_{TV} :=
\sup_{A\in\mathcal{E}}|\lambda_1(A)-\lambda_2(A)|$. The
Dobrushin coefficient (Dobrushin, 1956) of a Markov kernel $K$
is denoted $\beta(K):=\sup_{(x,y)\in
E^2}\|K(x,\cdot)-K(y\cdot)\|_{TV}$. The composition of two
Markov kernels, $K$ and $P$ is written $K\circ
P(x,dy):=\int_{E}K(x,du)P(u,dy)$, except if $K=P$ where
$K^2$ is used. Particular Markov kernels of interest are the product
and mixture kernels that combine component kernels $K_i$ in a
multiplicative and additive fashion, which will be denoted
\[
\prod_i K_i \qquad \sum_i \tau_i K_i
\]
for mixture weights $\tau_i$.  Given a probability
measure $\lambda$ and Markov kernel $K$, the standard
notation $\lambda K(dy):= \int_{E}\lambda(dx)K(x,dy)$ is adopted. For a
bounded measurable function $f$, the oscillations are written
$\textrm{osc}(f)=\sup_{(x,y)\in E^2}|f(x)-f(y)|$.

Finally, we use the vector notation $x_{1:N}:= (x_1,\dots,x_N)$
and denote a vector with its $i^{th}$ element missing as $x_{-i}$ and
with only its $i^{th}$ and $j^{th}$ elements as $x_{i,j}$. 
Also, $\mathbb{T}_{l:p}:=\{l,\dots,p\}$ for any $l\leq p$, $(l,p)\in\mathbb{Z}^2$.

\section{An Illustrative Example: Finite Mixture Modelling}
\label{IllustrativeExample}

Mixture models are typically used to model heterogeneous data,
or as a simple means of density estimation; see McLachlan \&
Peel (2000) for an overview. Bayesian analysis using mixtures
with an unknown number of components has only fairly recently
been implemented (see Richardson \& Green (1997) and in the
multivariate context Stephens (2000) and Dellaportas \&
Papageorgiou (2006)).

In this section, we describe the finite mixture model adopted and a
motivating example, in which standard MCMC methods do
not perform adequately.

\subsection{Model}
 Let $y_{1:n}$ denote observed data
that lie on support $y_{i}\in Y\subseteq \mathbb{R}^{r}$,
$i=1,\dots,n$. We assume that the $y_{i}$ are i.i.d with density:
\begin{eqnarray*}
p(y_{i}|\eta_{1:k},w_{1:k}) & = &
\sum_{j=1}^{k}w_{j}f(y_{i};\eta_{j})
\end{eqnarray*}
where $\eta_{1:k}$ are component specific parameters, the weights
$w_{1:k}$ are such that $\sum_{j=1}^{k}w_{j}=1$, $w_{j}\geq
0~\forall \; j$, $p$ denotes an arbitrary probability
density function and $f$ is the component density. For
our model, we restrict ourselves to the case of
multivariate $t$, $\mathcal{T}_{r}(\mu,\Lambda,s)$, where
$(\mu,\Lambda)$ are the location and covariance parameters and $s$
is the degrees of freedom.

In specifying the prior distributions, we follow Stephens (2000);
the component mean vectors are taken to be independent
$\mathcal{N}_{r}(\xi,\kappa^{-1})$ (multivariate normal distribution), and the $\Lambda_{j}$ are
independently $\mathcal{IW}_{r}(2\alpha,2\Psi)$, where
$\mathcal{IW}_{r}(\cdot,\cdot)$ is the inverse Wishart distribution.
The following hierarchical structure is adopted:
\begin{eqnarray*}
\Psi & \sim & \mathcal{W}_{r}(2g,(2h)^{-1}) ~~~~~\text{where}~\mathcal{W}_{r}(\cdot,\cdot)~\text{is the Wishart distribution}\\
w_{1:k-1}|k & \sim & \mathcal{D}(\delta) ~~~~~~~~~~~~~~~\text{where}~ \mathcal{D}(\cdot)~\text{is the symmetric Dirichlet distribution}\\
k & \sim & \mathcal{U}_{\{1,\dots,k_{\textrm{max}}\}}
~~~~~\text{where}~\mathcal{U}_{S}~\text{is the discrete uniform
distribution on countable set}~{S}.
\end{eqnarray*}
When using the multivariate $t$-distribution, the degrees of
freedom are assumed known. Thus, our prior is:
\begin{eqnarray*}
p(w_{1:k-1},\mu_{1:k},\Lambda_{1:k},\Psi,k) & = & \bigg[\prod_{j=1}^{k}
p(\mu_{j})p(\Lambda_{j}|\Psi)\bigg]
p(\Psi)p(w_{1:k-1}|k)p(k).
\end{eqnarray*}

\subsection{Data Processing and Prior Distributions}

For our example we consider the problem of clustering gene
expression data ( e.g.~Heard et al.~(2006)). The data consist of
the relative level of gene expression - a measure of genetic
activity - for $n=4221$ genes of the parasite \emph{Plasmodium}
measured at $r=46$ time points across a 48 hour portion of the
parasite life cycle. The data are discussed in detail in Bozdech
et al.~(2003). Finding meaningful subgroupings of the data is an
important task for biological investigators.

Even with modern computing power, applying a fully Bayesian
analysis to such data is not practical. Therefore, 
the data is preprocessed the to reduce the $n\times r$-dimensional data
to $l\times q$ dimensions.  We
achieve this by adopting a $\mathsf{K}$-means partitioning approach to reduce
$n$ to $l$, and then principal components to reduce $r$ to $q$. We
selected $l=1000$ and $q=6$.

%












%

Prior hyperparameters are set in a similar way to Stephens (2000); we
set $\xi$ to be the midpoint of the observed data in
its corresponding dimension, and $\kappa$ is taken to
be $\textrm{diag}(1/R_{1}^{2},\dots,1/R_{q}^{2})$ where $R_{d}$ is the
range of the data in dimension $d\in\mathbb{T}_q$. Additionally $g=q/2$,
$\delta=1$ and $\alpha=\alpha'+(q+1)/2$, where $\alpha'=3$. Finally
$h$ is
$\textrm{diag}(100q/(2\alpha R_{1}^{2}),\dots,$ $100q/(2\alpha R_{q}^{2}))$.
For illustration a $t$-distribution with four degrees of freedom is used as the component density in our mixture model and set $k_{\textrm{max}}=20$.

\subsection{Performance of Vanilla Sampler}

\label{rjex}

A vanilla reversible jump sampler (outlined in Appendix 1) was
implemented for the data above. We ran the reversible jump
algorithm for 250000 sweeps from two different starting points,
which had different initial $k$. The C program was run on a
Pentium 4, 3 Ghz machine and took approximately three hours. We observed
extremely poor mixing, with all variable dimension
acceptance rates below 1$\%$. It can be seen that there appears to
be support for $k\in\mathbb{T}_{3:5}$ components, but also for
$k\in\mathbb{T}_{8:11}$. The main problem is that the vanilla
sampler cannot jump between these two modes.

The poor performance of the vanilla algorithm can be partly
attributed to the difference in dimensionality between different
mixture models.  To jump from a $k$ to a $k+1$ component mixture
model, we need to draw $1+q+q(q+1)/2=28$ random variables, so we
will need proposals/jump functions that are more tailored than
those used in the vanilla sampler.  In addition, if
between-model jumps are made infrequently, a within-model chain
spends a long time in a high density region of the
state space specific to that model that is consistent with the
data. This within-model adaptation renders a jump between models
even less likely, as high density regions in different models do
not necessarily correspond to each other in a straightforward
fashion.  This heuristic argument applies to local moves, and
whilst more global moves might be constructed (as noted by Green
(2003a), they are more likely to produce better mixing than the
local moves attempted here), dimension matching dictates that
this will be difficult to achieve efficiently.

\subsection{Alternative Algorithms}

Possible MCMC methods
that might be used to deal with the problems encountered here
may be the auxiliary variable method of Brooks et al.~(2003),
but this will require a reasonable movement of the chain in the
first place. Constructing proposal distributions by creating an
approximation of the target in each dimension using fixed
dimensional MCMC (Hastie, 2005 (University of Bristol PhD
thesis)) is complicated by the label-switching problem (see
Jasra et al.~(2005) for a discussion).
Delayed rejection (Green \& Mira, 2001) and tempered transitions
(Jennison et al.,~2003) often do not provide a general solution
to the problems highlighted by this example. For the former
method, insufficient information is learnt at the first stage
rejection to provide a significantly improved second stage
proposal, whereas for the latter, it often takes a large number
of intermediate simulations (e.g.~100) to provide a reasonable
proposal, but even so, the performance gain is not always
substantial.








\section{Population-Based Reversible Jump}

We now consider population-based reversible jump algorithms.  First
we give details of the population MCMC method, and then study the
theoretical properties of the algorithm, in particular its uniform
ergodicity.

\subsection{The Population MCMC Method}

Consider a sequence of probability measures
$\{\pi_{i}\}_{i \in\mathbb{T}_N}$,
each assumed to admit a density (also
denoted $\pi_i)$ with respect to $\lambda$ on $(E,\mathcal{E})$.
Denote the support of the $i^{th}$ density $E_i=\{(x_{k},k)\in E :
\pi_i(x_{k},k)>0\}$, $i\in\mathbb{T}_N$. Define probability measure
$\pi^*(d(x_{1:k_1},k_1,\dots,x_{1:k_N},k_N))$ on $(E^N,\mathcal{E}^N)$ as:
\begin{eqnarray*}
\pi^*(d(x_{1:k_1},k_1,\dots,x_{1:k_N},k_N)) & = & \bigg[\prod_{i=1}^{N}\pi_i(x_{1:k_i},k_i)\lambda(d(x_{1:k_i},k_i))\bigg].
\end{eqnarray*}
Our objective is to now construct an ergodic Markov kernel
$\widetilde{K}:E^N\times\mathcal{E}^N\rightarrow[0,1]$ with
stationary distribution $\pi^*$.

\subsection{The Structure of the Population}

The population approach proceeds as follows; we generate $N$
parallel (variable dimension) chains in order to explore the target
correctly.  For the remainder of the paper $\pi_{1}\equiv\pi$ and
the sequence of densities are taken $\pi_{i}\propto\pi^{\zeta_{i}}$,
$1=\zeta_{1}>\cdots>\zeta_{N}>0$, where $\{\zeta_{i}\}_{i \in\mathbb{T}_N}$ are inverse
temperature parameters; the collection $\{\zeta_i\}$ is referred to
as the \emph{temperature ladder}. This is the approach in Liang \&
Wong (2001), but other settings include taking $\zeta_{i}=1$ for
each $i$; see Del Moral et al.~(2006) for
further discussion.  We seek to use the extra information contained
in $N$ chains at different temperatures to allow large moves in
dimension of the chain of interest as well as allowing improved
performance in more local moves (within and between dimensions).

One of the main problems of parallel tempering (Geyer, 1991;
Hukushima \& Nemoto, 1996) is that only minimal interactions between
the chains are allowed.  Our approach differs as we will allow
$\widetilde{K}$ to include moves which use the entire
population, other than merely the exchange move (see Section
\ref{mainres}). Thus we seek a more population-based approach to
justify the increased cost in computation.

We now investigate some theoretical aspects of population
algorithms. As our results are not confined to the trans-dimensional
case, we drop the $k$ from the notation. That is, $E$ is a general
state space with associated countably generated $\sigma-$algebra
$\mathcal{E}$.


\subsection{Some Theory for Markov chains}

We will concentrate on the concept of uniform ergodicity (see
Roberts \& Rosenthal (2004) and the references therein). Our
objective is to utilize the following property of uniformly
ergodic Markov kernels $K$; the total variation distance between the
$n-$step kernel and its stationary measure $\pi$,
($\|K^n(x,\cdot)-\pi(\cdot)\|_{TV}$), is bounded above by:
\begin{eqnarray*}
R & = & \big(1-\epsilon\big)^{\left \lfloor \frac{n}{n_{0}}\right\rfloor }
\end{eqnarray*}
where $n_{0}$ and $\epsilon$ are parameters in the following
minorization condition (for $C=E$): $\exists \epsilon>0$ such that
for $\nu$ some non-trivial probability measure and integer
$n_0>0$ and
$\forall x\in C$, $A\in\mathcal{E}$:
\begin{eqnarray}
K^{n_0}(x,A) & \geq & \epsilon\nu(A).
\label{minor}
\end{eqnarray}
If $C\in\mathcal{E}$ is such that (\ref{minor}) is satisfied,
then we term it $(n_0,\epsilon,\nu)-$small.

It follows that, to compare convergence speed (note that
spectral gap techniques - see, for example, Diaconis \& Saloff
Coste (1993) - can also be considered; see Section
\ref{refereeiswrong}) of two uniformly ergodic Markov kernels,
we can compute the pair $(n_{0},\epsilon)$; if one of the
kernels has a substantially larger $\epsilon$ (and smaller $n_0$), then we might
expect it to converge more quickly to the target of interest.

\subsection{Main Result}
\label{mainres}

We now demonstrate that population MCMC approaches can be
defined, using specially constructed component kernels, 
so that they are uniformly ergodic.  For the following Theorem we
denote a mutation (Markov transition) kernel as $K_{M}$ and an
exchange kernel $K_{E}$. A {\em mutation} move is a
Metropolis-Hastings (MH) kernel which attempts to change a
single member of the population, dependent only on the current
state of the chain for that member. An {\em exchange} move is a
Metropolis-Hastings kernel which proposes to swap the current
states of two different members of the population,
mathematically it is defined as,
$K_E^{(il)}:E^2\times\mathcal{E}^{2}\rightarrow[0,1]$:
\begin{eqnarray*}
K_{E}^{(il)}(x_{i,l},dx_{i,l}') & = & \min\bigg\{1,\frac{\pi_i(x_l)\pi_l(x_i)}{\pi_i(x_i)\pi_l(x_l)}\bigg\}
\delta_{x_l}(dx_i')\delta_{x_i}(dx_l') + {}\\
{} & &\bigg[1-\min\bigg\{1,\frac{\pi_i(x_l)\pi_l(x_i)}{\pi_i(x_i)\pi_l(x_l)}\bigg\}\bigg]
\delta_{x_i}(dx_i')\delta_{x_l}(dx_l').
\end{eqnarray*}
For simplicity, we assume that $E_i=E$ $\forall
i\in\mathbb{T}_N$. We now give our main theoretical result;
\vspace{0.1 in}

\begin{res}
\label{res1} Let $\widetilde{K}$ be a Markov kernel defined on a
measurable space $(E^N,\mathcal{E}^N)$ such that:
\begin{eqnarray*} \widetilde{K}(x_{1:N},dx_{1:N}') & = &
\bigg(\tau K_{M} + (1-\tau)K_{E}\bigg)(x_{1:N},dx_{1:N}')\\
K_{M}(x_{1:N},dx_{1:N}') & = & \prod_{i=1}^{N}K_{i}(x_{i},dx_{i}')\\
K_{E}(x_{1:N},dx_{1:N}') & = & \sum_{i=1}^{N-1}\sum_{l=i+1}^{N}\varepsilon_{il}
K_{E}^{(il)}(x_{i,l},dx_{i,l}')
\delta_{x_{-(i,l)}}(dx_{-(i,l)})
\end{eqnarray*}
with
\[
\tau,\varepsilon_{il}\in(0,1) \qquad
\sum_{i=1}^{N-1}\sum_{l=i+1}^{N} \varepsilon_{il}=1
\]
and for each $i\in\mathbb{T}_N$ $K_{i}$ is an aperiodic,
$\lambda$-irreducible Markov kernel with invariant
measure $\pi_{i}(x)\lambda(dx)$. Suppose that $K_{j^{*}}$ is
uniformly ergodic for one $j^{*}\in\mathbb{T}_N$ and for each
$i\neq j^{*}$ $\exists\varrho_{i}\in(0,\infty)$  such that
$\pi_{i}(x)\leq\varrho_{i}\pi_{j^{*}}(x)$ $\forall \; x\in E$. Then
$\widetilde{K}$ is uniformly ergodic.
\end{res}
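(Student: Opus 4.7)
The plan is to verify a Doeblin-type minorization $\widetilde{K}^{n_0}(x_{1:N},A) \geq \epsilon\,\nu(A)$ on $(E^N,\mathcal{E}^N)$; by the criterion in Section 3.3 this will imply uniform ergodicity. First I would invoke uniform ergodicity of $K_{j^*}$ to obtain an integer $m$, a constant $\epsilon^*>0$ and a probability measure $\nu^*$ on $(E,\mathcal{E})$ with $K_{j^*}^m(x,\cdot) \geq \epsilon^*\nu^*(\cdot)$ for every $x\in E$.

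Next, I would bound $\widetilde{K}^{n_0}$ from below by the single ``path'' in its mixture decomposition that performs $m$ mutations, then the exchange $K_E^{(j^*,l_1)}$, then $m$ mutations, $\ldots$, then $K_E^{(j^*,l_{N-1})}$, and finally $m$ more mutations, where $(l_1,\dots,l_{N-1})$ is any enumeration of $\mathbb{T}_N\setminus\{j^*\}$ and $n_0 := Nm + (N-1)$. Using $\widetilde{K} \geq \tau K_M$ and $\widetilde{K}(x,\cdot) \geq (1-\tau)\varepsilon_{j^*,l}\,K_E^{(j^*,l)}(x_{j^*,l},\cdot)\,\delta_{x_{-(j^*,l)}}(\cdot)$, this path contributes a constant factor $c := \tau^{Nm}(1-\tau)^{N-1}\prod_k \varepsilon_{j^*,l_k}>0$, so $\widetilde{K}^{n_0} \geq c\,\Pi$, where $\Pi$ is the composition above.

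The heart of the argument is an inductive analysis of $\Pi(x_{1:N},\cdot)$. The domination $\pi_l \leq \varrho_l\pi_{j^*}$ yields the key bound
\[
\min\!\Bigl\{1,\,\tfrac{\pi_{j^*}(v)\pi_l(u)}{\pi_{j^*}(u)\pi_l(v)}\Bigr\} \;\geq\; \min\!\Bigl\{1,\,\tfrac{\pi_l(u)}{\varrho_l\,\pi_{j^*}(u)}\Bigr\} \;=:\; \beta_l(u)
\]
on the exchange acceptance probability, which depends only on $u$ (the state of coord $j^*$), not on $v$. After the initial $K_M^m$, the $j^*$-marginal is $\epsilon^*\nu^*$-dominated and all coords are independent. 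Retaining only the swap branch of the first exchange and inserting this $v$-independent bound, the joint of $(y_{j^*},y_{l_1})$ factors and yields the lower bound
\[
\epsilon^*\,\mu_{l_1}(dy_{j^*})\,\tilde{\nu}_{l_1}(dy_{l_1})\prod_{l \neq j^*,l_1}\mu_l(dy_l),
\]
with $\mu_l := K_l^m(x_l,\cdot)$ and $\tilde{\nu}_l(A) := \int \nu^*(du)\,\beta_l(u)\,\mathbf{1}_A(u)$ an $x$-free positive sub-probability measure. The next $K_M^m$ re-minorizes coord $j^*$ by $\epsilon^*\nu^*$ (via uniform ergodicity of $K_{j^*}$) and pushes the $l_1$-component through $K_{l_1}^m$, leaving an $x$-free positive measure on coord $l_1$. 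Iterating through $l_2,\ldots,l_{N-1}$ minorizes every coordinate in turn, and the product structure of the lower bound is preserved, since an exchange not involving $l$ leaves coord $l$ untouched.

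The main obstacle will be preserving the $x$-free nature of the lower bound as successive exchange moves progressively couple the coordinates. The $v$-independence of the acceptance bound, which is a direct consequence of $\pi_l \leq \varrho_l\pi_{j^*}$, is precisely what lets the swap branch factor into independent one-dimensional contributions, enabling the inductive assembly of the final joint minorization on $E^N$.
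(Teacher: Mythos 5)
Your proposal is correct and follows essentially the same route as the paper's: both rest on the key $v$-independent lower bound $\min\{1,\pi_{j^*}(v)\pi_l(u)/(\pi_{j^*}(u)\pi_l(v))\}\geq\min\{1,\pi_l(u)/(\varrho_l\pi_{j^*}(u))\}$ for the exchange acceptance probability, and both propagate the minorization of $K_{j^*}$ to every other coordinate through alternating mutation/exchange blocks, preserving a product-form $x$-free lower bound. The only difference is organizational -- you unroll the argument into a single explicit path of $N-1$ exchanges each pairing $j^*$ with a distinct coordinate, whereas the paper proves the case $N=2$ and then inducts on $N$ via the composite kernel $(K_M\circ K_E\circ K_M)^{N-1}$ -- which is not a substantive departure.
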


\begin{proof}See Appendix 2 for the proof.\end{proof}

\noindent\emph{Remark} 1. The assumption of uniform ergodicity
for at least one $K_{j^{*}}$ is not overly restrictive. In many
applications, for example, Bayesian analyses with a proper prior
and a bounded likelihood, a proposal under an independence
kernel with proposal density $q$, where $\pi(x)/q(x)\leq \varrho
~\forall \; x\in E$ (see Tierney (1994), Mengersen \& Tweedie
(1996)), can be found - for example, where $q$ is the prior
density - which ensures uniform ergodicity at the cost of being
a poor proposal for $\pi$. However, if applied to a related
distribution in the population, this proposal may perform quite
well. The assumption $\pi_{i}(x)\leq
\varrho_{i}\pi_{j^{*}}(x)~\forall \; x\in E$ is quite reasonable
and would apply in the framework described here, when $\pi$ is
bounded, for the case $j^{*}=N$. Crucially, for small $N$, we
can establish that the fast rate of convergence for one of the
kernels is propagated through the population (see Section
\ref{convrates}). We note that the result holds for the case
where $K_M$ is a mixture of
kernels, but have omitted the details for brevity.\\

\noindent\emph{Remark} 2. The Theorem shows, in simple cases,
where we can design uniformly ergodic chains for the target
$\pi$ (which are likely to perform well in practice), how to
compare population and single chain approaches. In other words
we can \emph{construct} a population sampler which has a faster
rate of convergence to $\pi^{*}$ (and hence $\pi$) which
justifies the increased cost in computation. Additionally, we
can investigate the population kernels, such as $\widetilde{K}$,
which are likely to provide good mixing for more complex
examples. It should be noted that, due to the limitations of
investigating the minorization conditions, it is difficult to
use this result
when $N$ is large; we discuss this further in Section \ref{refereeiswrong}.\\

\subsection{Impact of the Exchange kernel}

In the following result we show that the exchange kernel, which is
reducible, can indeed improve the rate of convergence relative to a parallel
MCMC algorithm (no interaction between the chains).  To study
this phenomenon, we introduce the following mixing condition (M)
(e.g.~Del Moral (2004)) for Markov kernel
$K:E^{N}\times\mathcal{E}^N\rightarrow[0,1]$: $\exists \;
\epsilon >0$ such that $\forall \; (x,y)\in E^{2N}$
\begin{equation}
\label{M} K(x,\cdot) \geq   \epsilon K(y,\cdot). \tag{M}
\end{equation}
Also, introduce the set $\mathbb{T}_{N}^c
=\{(i,l):i,l\in\mathbb{T}_N, i\neq l\}$.

\begin{prop}
\label{prop1}
Assume $K_M$ satisfies (\ref{M}). Then for any initial distribution $\eta$ and $n\geq 1$:
\begin{eqnarray*}
\|\eta(K_M\circ K_E)^n - \pi^*\|_{TV} & \leq & [2(1-\alpha)(1-\epsilon)]^n||\eta - \pi^*||_{TV}
\end{eqnarray*}
where $\alpha = \sum_{(i,l)\in\mathbb{T}_{N}^c}\varepsilon_{il}[1-\inf_{(x_i,x_l)\in
E^2}\rho_{i,l}(x_i,x_l)]$ and
$\rho_{i,l}(x_i,x_l)  =  \min\bigg\{1,\frac{\pi_i(x_l)\pi_l(x_i)}{\pi_i(x_i)\pi_l(x_l)}\bigg\}$.
\end{prop}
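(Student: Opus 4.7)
The plan is to establish the one-step contraction $\|\mu(K_M \circ K_E)\|_{TV} \leq 2(1-\alpha)(1-\epsilon)\|\mu\|_{TV}$ for the zero-mass signed measure $\mu := \eta - \pi^*$, and then iterate $n$ times.  Iteration is legitimate because $\pi^*$ is invariant under $K_M$ (a product of kernels each preserving $\pi_i$) and under $K_E$ (a mixture of Metropolis--Hastings exchange kernels each preserving $\pi^*$), so $\eta(K_M\circ K_E)^k - \pi^*$ equals $\mu(K_M\circ K_E)^k$ and remains of total mass zero for every $k\geq 0$.

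The first factor $(1-\epsilon)$ will come from $K_M$.  Condition (M) is equivalent (for any fixed reference point $y_0\in E^N$) to the Doeblin minorisation $K_M(x,\cdot)\geq\epsilon K_M(y_0,\cdot)$ for all $x$, and a standard Dobrushin-type argument then gives $\|\mu K_M\|_{TV}\leq(1-\epsilon)\|\mu\|_{TV}$, or equivalently $\beta(K_M)\leq 1-\epsilon$.

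The second factor $2(1-\alpha)$ will be extracted from the action of $K_E$ on $\mu K_M$.  Writing
\[
K_E(z,A) = \mathbb{1}_A(z) + \sum_{(i,l)\in\mathbb{T}_N^c}\varepsilon_{il}\rho_{i,l}(z_{i,l})\bigl[\mathbb{1}_A(s_{il}(z)) - \mathbb{1}_A(z)\bigr],
\]
where $s_{il}$ denotes the swap of coordinates $i$ and $l$, and then splitting $\rho_{i,l}(z_{i,l}) = \inf\rho_{i,l} + \bigl(\rho_{i,l}(z_{i,l}) - \inf\rho_{i,l}\bigr)$, I would apply the identity $|\int f\,d\nu|\leq\mathrm{osc}(f)\,\|\nu\|_{TV}$ (valid for zero-mass signed $\nu$) with $f(z)=K_E(z,A)$ and $\nu=\mu K_M$: the constant-coefficient piece contributes an exact push-forward of $\mu K_M$ under the $s_{il}$ (preserving total variation), while the residual in $z$ is pointwise bounded by $\sum_{(i,l)\in\mathbb{T}_N^c}\varepsilon_{il}(1-\inf\rho_{i,l}) = \alpha$ after summing over $(i,l)$, and the resulting oscillation bound produces the coefficient $2(1-\alpha)$.

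The main obstacle is this second step.  The Dobrushin coefficient $\beta(K_E)$ in isolation generically equals $1$, because $K_E(x,\cdot)$ and $K_E(y,\cdot)$ are typically supported on disjoint finite sets for distinct $x,y$, so the naive submultiplicativity $\beta(K_M\circ K_E)\leq\beta(K_M)\beta(K_E)$ only delivers the factor $1-\epsilon$.  The improvement by $(1-\alpha)$ can only be obtained by combining the zero-mass property of $\mu K_M$ with the specific algebraic form of the swap kernel, and the factor of $2$ emerges from bounding the oscillation over the rejection and acceptance branches of each $K_E^{(il)}$ and the combinatorics of the sum over ordered pairs $(i,l)\in\mathbb{T}_N^c$.
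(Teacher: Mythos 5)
Your overall frame coincides with the paper's: both arguments reduce the claim to a bound on the Dobrushin coefficient of the composite kernel $K_M\circ K_E$ (equivalently, a one-step contraction for zero-mass signed measures, iterated via $\|\mu K^n\|_{TV}\leq \beta(K)^n\|\mu\|_{TV}$), both extract the factor $(1-\epsilon)$ from condition (M) exactly as you do, and your observation that $\beta(K_E)=1$ in general --- so that the gain must come from treating $K_M\circ K_E$ as a single kernel --- is correct and is indeed why the paper works with the composite.

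The gap is in the step that is supposed to produce the coefficient $2(1-\alpha)$, and it is precisely where you depart from the paper. Write $\rho_{i,l}=\inf\rho_{i,l}+r_{i,l}$ and $\nu=\mu K_M$. Your decomposition gives $\nu K_E(A)=(1-c)\nu(A)+\sum_{(i,l)}\varepsilon_{il}(\inf\rho_{i,l})\nu(s_{il}^{-1}A)+\int \sum_{(i,l)}\varepsilon_{il}\,r_{i,l}(z)[\mathbb{I}_A(s_{il}(z))-\mathbb{I}_A(z)]\nu(dz)$ with $c=\sum_{(i,l)}\varepsilon_{il}\inf\rho_{i,l}$. The first two terms are push-forwards of $\nu$ with nonnegative coefficients summing to one, so they only \emph{preserve} the total variation norm; the residual term is what your oscillation bound controls, by at most $2\sum_{(i,l)}\varepsilon_{il}(1-\inf\rho_{i,l})\|\nu\|_{TV}$. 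The best this route can yield is therefore a factor of the form $1+2\sum_{(i,l)}\varepsilon_{il}(1-\inf\rho_{i,l})$, which is never below one and is not $2(1-\alpha)$; note also that your own sentence bounds the residual by $\alpha$ and then announces the coefficient $2(1-\alpha)$, which does not follow from anything stated. The paper avoids this dead end by never isolating a norm-preserving piece: it bounds $|K_M\circ K_E(f)(u)-K_M\circ K_E(f)(v)|$ directly, splitting $K_E(f)$ into the acceptance branch $(f\circ s)\rho$ and the rejection branch $f(1-\rho)$, integrating each against the zero-mass measure $K_M(u,\cdot)-K_M(v,\cdot)$ to pick up $\beta(K_M)\leq 1-\epsilon$, and then asserting $\textrm{osc}(f\rho)+\textrm{osc}(f(1-\rho))\leq 2\,\textrm{osc}(\rho)=2(1-\inf\rho_{1,2})$, so that the acceptance probability, not the test function, controls the oscillation. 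If you want to follow the paper you must engage with exactly that multiplicative grouping; be aware, though, that the paper's argument lands on the rate $2(1-\epsilon)\sum_{(i,l)}\varepsilon_{il}(1-\inf\rho_{i,l})$ (matching the displayed $2(1-\alpha)(1-\epsilon)$ only under the reading $\alpha=\sum_{(i,l)}\varepsilon_{il}\inf\rho_{i,l}$, which is the one consistent with Remark 1), and that its key inequality $\textrm{osc}(f\rho)\leq\sup(f)\,\textrm{osc}(\rho)$ requires careful justification, since as stated it fails for, e.g., constant $\rho<1$ and non-constant $f$.
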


\begin{proof}See Appendix 2 for the proof.\end{proof}

\noindent\emph{Remark} 1. The result provides a sufficient
condition to improve upon parallel MCMC, in that it gives a
tighter bound on the total variation (which is, under (M),
$(1-\epsilon)^{\frac{3n}{2}}||\eta - \pi^*||_{TV}$ - the factor
3/2 is used to make a fair comparison, in terms of CPU time -
for Markov kernel $K_M$): the exchange probabilities are
lower-bounded by $1-\frac{(1-\epsilon)^{\frac{1}{2}}}{2}$.
Essentially, it is crucial to select a good sequence of densities
and use population moves to improve convergence. As noted by a
referee, this indicates that if $\epsilon$ is small, that is
$K_M$ is poorly mixing, the exchange step can help to improve
the convergence rate. In addition, if $\forall (x,y)\in E^2$,
$(i,l)\in\mathbb{T}_{N}^c $ we have $\varrho_L\leq
\frac{\pi_i(x)}{\pi_l(y)}\leq \varrho_U$ then the lower bound
may be achieved if $\frac{\varrho_L}{\varrho_U}\geq
1-\frac{(1-\epsilon)^{\frac{1}{2}}}{2}$, which suggests that we
would require the densities to be similar, to improve
convergence
rates.\\

\noindent\emph{Remark} 2. The result is only of use on compact
spaces (such as for the finite state space variable selection
example of section \ref{BVS} below), where we are able to lower
bound acceptance probabilities and apply the mixing condition
(M). However, even on compact spaces, (M) may be difficult to
verify for MH kernels due to the rejection probability. We note
that this can be circumvented by iterating the kernel.

\section{Example: Bayesian Variable Selection}
\label{BVS} We have established the potential theoretical benefits that
population MCMC methods offer.  A 
specific example is now studied.

\subsection{Model and Data}

Consider the statistical model:
\begin{eqnarray*}
y_{i} & = & \gamma_{0} + \sum_{j=1}^{k_{\textrm{max}}}\vartheta_{j}\gamma_{j}x_{ij}
+ \varpi_{i}
\end{eqnarray*}
with $\varpi_{i}$ i.i.d $\mathcal{N}(0,\sigma^{2})$,
$\vartheta_{j}\in\{0,1\}$, and $\gamma_{j}\in\mathbb{R}$. If we
consider the conjugate prior specification:
$p(\gamma_{0:k}|\sigma,k)=\mathcal{N}_{k+1}(m,\sigma^{2}V)$,
$p(\sigma^{2})=\mathcal{IG}(a,b)$ (where $\mathcal{IG}(\cdot,\cdot)$
is the inverse Gamma distribution) and
\begin{eqnarray*}
p(\vartheta_{1:k_{\textrm{max}}},k)=\frac{1}{k_{\textrm{max}}+1}{k_{\textrm{max}}\choose
k}^{-1}\mathbb{I}_{S_k}(\vartheta_{1:k_{\textrm{max}}})\quad k=0,\dots,k_{\textrm{max}}
\end{eqnarray*}
(where $S_k=\{\vartheta_{1:k_{\textrm{max}}}\in\{0,1\}^{k_{\textrm{max}}}:
\sum_{j=0}^{k_{\textrm{max}}}\vartheta_j=k\}$)
then we can integrate out the parameters $(\sigma,\gamma_{0:k})$ and
sample from a distribution on a finite state space.

We generated 100 data points from a linear model, with $k_{\textrm{max}}=8$
(i.e.~$256$ states). The
posterior probability of a null model was 0.55 and 0.33 for the saturated
model. This is a typical (but simplified) situation for which a standard MCMC sampler would
fail to move around the state space easily.

\subsection{Comparison of Convergence Rates}
\label{convrates}
To sample from the posterior distribution we use an MCMC algorithm
detailed in Denison et al.~(2002) page 53 (with modification
to the variable selection case).

Since the state space is finite, it is clear that an ergodic
Markov chain with appropriate stationary distribution is
uniformly ergodic. To construct an appropriate $\nu$ in
(\ref{minor}) (i.e.~that leads to a large $\epsilon$) we used the
approach discussed in Chapter 6 of Robert \& Casella (2004) (for
example). Let $K_{ij}^{n}$ ($i,j\in E$) denote the $n$ step
transition probability and suppose that
$\displaystyle{\inf_{i}}\{K_{ij}^{n}\}>0$ for some $j$, then
$\forall \; j$
\begin{eqnarray*}
K_{ij}^{n} & \geq & \inf_{l}\{K_{lj}^{n}\} = \epsilon\upsilon_{j}
\end{eqnarray*}
say, where \[\upsilon_{j} =
\frac{\displaystyle{\inf_{i}}\{K_{ij}^{n}\}}{
\displaystyle{\sum_{l\in E}}\inf_{i}\{K_{il}^{n}\}}
~~~\text{and} ~~~\epsilon = \sum_{l\in
E}\inf_{i}\{K_{il}^{n}\}.\] For the algorithm discussed above,
we found that the bound on the rate of convergence was
reasonably similar for $n_{0}=1000$ to $n_{0}=5000$; we focus
upon the pair $(1000,3.63\times 10^{-3})$. To make the analysis
computationally comparable to the population algorithm described
below, we let 50 applications of this kernel be equivalent to a
single step (i.e.~this new kernel has $(20,3.63\times 10^{-3})$
as the $(n_{0},\epsilon)$ pair).

For a population sampler, suppose we take a single auxiliary distribution:
\begin{eqnarray*}
\pi_{2}(\vartheta_{1:k_{\textrm{max}}},k|x_{1:n},y_{1:n})
& \propto & L(x_{1:n},y_{1:n};\vartheta_{1:k_{\textrm{max}}})^{\zeta}
p(\vartheta_{1:k_{\textrm{max}}},k)
\end{eqnarray*}
with $\zeta=0.01$ and $L$ the likelihood function.  
The choice of $0.01$ is used for illustration, and is adopted
to demonstrate the impact of the usage of a related, but easier to sample, distribution in the population.
We
concentrate upon a kernel which updates both chains via the MCMC
algorithm mentioned above for 10 sweeps, followed by an exchange, then another
10 sweeps (which corresponds to the kernel we sample from, that
is, it is a single time step).

It can be shown that the $(n_{0},\epsilon)$ pair for the population
sampler is $(1,6.01\times 10^{-4})$. This was computed by finding
the $\epsilon$ and $\nu$ in the minorization condition for $\pi_2$ (as above)
and then:
\begin{eqnarray*}
\phi & = & \sum_{(\vartheta_{1:k_{\textrm{max}}},k)\in E}\nu(\vartheta_{1:k_{\textrm{max}}},k)
\min\bigg\{1,\frac{\pi_1(\vartheta_{1:k_{\textrm{max}}},k)}{\pi_2(\vartheta_{1:k_{\textrm{max}}},k)\varrho_1}\bigg\}\\
\varrho_1 & = & \sup_{(\vartheta_{1:k_{\textrm{max}}},k)\in E}\frac{\pi_1(\vartheta_{1:k_{\textrm{max}}},k)}{\pi_2(\vartheta_{1:k_{\textrm{max}}},k)}
\end{eqnarray*}
the constant in the minorization condition is then $\epsilon^2\phi$.

The bound on total variation distance, $R$, suggests a much faster rate of convergence for the
population algorithm: $M_{0.01}=25326$ (the
number of iterations to achieve a bound on the total variation
distance less than 0.01) for the vanilla algorithm and
$M_{0.01}=7660$ for the population algorithm, that is,
it is significantly faster.

This example demonstrates that a simple extension of the original
algorithm to include an auxiliary distribution that provides a good
proposal (for the original target) allows efficient movement around
the state space. That is, it reiterates the point in remark 1 to Theorem \ref{res1},
that the fast convergence rate of one of the chains is propagated through the system.

\subsection{Summary}
\label{refereeiswrong} A point of interest, raised by a referee,
is that we found as we increased $N$ we
found it difficult to find a Markov kernel that could improve
upon rate of convergence to stationarity when $N=2$. 
This can be explained as follows;
suppose, for illustration, we run $N$ parallel chains with the
same (marginal) invariant measure $\pi(x)\lambda(dx)$ with
kernel $K_M\circ K_E \circ K_M$ in the proof of Theorem
\ref{res1}. Assume that all the mutation kernels, except one,
are strongly aperiodic $(n_0=1)$ and mix almost perfectly (that
is, are uniformly ergodic with $\epsilon\approx 1$). Then we can
easily establish that the constant in the minorization condition
is $\epsilon^{N} \rightarrow 0$ as $N\rightarrow \infty$, so
that $R\approx 1$. This establishes two points:

\begin{itemize}

\item Firstly, that continually extending the state-space will
inevitably lead to slower convergence of the Markov chain,
unless very efficient population moves may be constructed.  That
is, we should not naively extend the state-space and expect our
Markov chain to converge more quickly; unlike convergence of
particle algorithms (Del Moral, 2004), there is not necessarily
an improved convergence property as $N\rightarrow\infty$.
This is an open problem for future research.

\item Secondly, that in cases where $R$ is not informative on
the rate of convergence, it may be more beneficial to investigate
other properties of the Markov kernel, such as the spectral gap.
Such an analysis is likely to be
far more involved than discussed here; see Madras \& Zheng (2003)
for example.

\end{itemize}

\section{Population Moves for the Mixture Example}

Now that we have established, for difficult problems, that
population methods can lead to faster convergence, we discuss how to
implement population moves for our mixture example (Section
\ref{rjex}). Our notation is such that $\theta_{i}=(\eta^{i},w^{i},
\Psi^{i},k_{i})$, $i\in\mathbb{T}_N$ and $(\eta^{i},w^{i})$ refers to
all of the component specific parameters and weights for chain $i$.
In our tempering approach (i.e.~$\pi^{\zeta_i}$), we will temper the likelihood terms only,
rather than the full posterior, to avoid any integrability problems.
We now proceed to combine several MCMC methods to improve the mixing
ability of the chain.

\subsection{Exchange Moves}
\label{exch} An exchange move is used to swap information
between two different parallel tempered chains.  Our strategy is
as follows: at iteration $t$ we select two adjacent chains (in
terms of the temperature parameter) uniformly at random and
propose to swap their values. In order to achieve a reasonable
interaction between the chains, the temperature ladder is set so
that this move is accepted about half of the time (Liu, 2001); see
Section \ref{genex} for further discussion.

One way to improve this move is to use the delayed rejection method,
as suggested by Green \& Mira (2001). At iteration $t$, we select
any two chains $i_{1}$ and $i_{2}$ to swap, accepting or rejecting
with the usual Hastings ratio, that is, with probability
\begin{eqnarray*}
\rho_{1}(\theta_{1:N},\theta_{1:N}') & = & \min\Bigg\{1,\frac{\pi_{i_{1}}(\theta_{i_{2}})
\pi_{i_{2}}(\theta_{i_{1}})}{\pi_{i_{1}}(\theta_{i_{1}})\pi_{i_{2}}
(\theta_{i_{2}})}\Bigg\}
\end{eqnarray*}
where the labelling of the chains is with respect to the current
state of the chain and $\theta_{1:N}'$ denotes the new configuration
of chains.  If this is rejected, we select two adjacent chains
$i_{3}$ and $i_{4}$ to swap, denoting this configuration
$\theta_{1:N}''$.  To ensure reversibility with respect to the
target, as part of the delayed rejection method, we construct a
pseudo move which consists in starting from the second stage
proposed state $\theta_{1:N}''$, proposing to move to
$\theta_{1:N}^*$ (which swaps $\theta_{i_1}''$ and $\theta_{i_2}''$)
and rejecting it with probability
$\rho_1(\theta_{1:N}'',\theta_{1:N}^*)$. The second stage move is
accepted with probability
\begin{eqnarray*}
\rho_{2}(\theta_{1:N},\theta_{1:N}'') & = & \min\Bigg\{1,\frac{\pi_{i_{3}}(\theta_{i_{4}})
\pi_{i_{4}}(\theta_{i_{3}})\{1-\rho_{1}(\theta'',\theta^{*})\}}
{\pi_{i_{3}}(\theta_{i_{3}})\pi_{i_{4}}(\theta_{i_{4}})
\{1-\rho_{1}(\theta,\theta')\}}\Bigg\}.
\end{eqnarray*}
This move allows for increased interaction within the population.
At the first stage, we allow any pair of chains to be swapped, thus
if a state of a chain at a high temperature is consistent with one of the
distributions at a lower temperature, it is allowed to quickly jump
down the ladder.

\subsection{Crossover Moves}

Liang \& Wong (2001) employ various crossover moves in an
evolutionary MC algorithm; the objective is to increase the interaction
within the population.  In our algorithm, we use two move types:

\subsubsection*{I. Variable dimension crossover}
To construct a move likely to have high acceptance probability
in the mixture model, we begin by reassigning mixture component
labels in each chain so they satisfy an ordering constraint on
the weights, in order approximately to match the labels of
components in different chains. When in state $\theta$, we
select a variable dimension crossover with probability
$v(\theta)$;
\begin{displaymath}
v(\theta) = \left\{\begin{array}{ll}
1 & \textrm{if}~k_{i}\neq k_{j}~\textrm{for some}~i\neq j \\
0 &  \textrm{otherwise.}
\end{array}\right.
\end{displaymath}
Note the case $v(\theta)=0$ corresponds to a `do nothing' move. We
select a pair of chains with differing dimension with probability
inversely proportional to the squared difference between the
dimensions. We then propose the new state of the population members,
by swapping $k$'s and the weights.  We take the lowest weighted
component specific parameters of the higher dimensional chain to the
lower dimensional chain, i.e. if $k_{i_{1}}>k_{i_{2}}$ for the
selected chains $i_{1},i_{2}$ we propose:
\begin{eqnarray*}
\eta_{i_{1}}' & = & (\eta_{k_{i_{1}}-k_{i_{2}}}^{i_{1}} ,\dots, \eta_{k_{i_{1}}}^{i_{1}})\\
\eta_{i_{2}}' & = & (\eta_{1}^{i_{1}},\dots, \eta_{{k_{i_{1}}-
k_{i_{2}-1}}}^{i_{1}}, \eta_{1}^{i_{2}},
\dots,\eta_{k_{i_{2}}}^{i_{2}})
\end{eqnarray*}
where $\eta^{i_{1}}_{j}$ denotes an element of $\eta^{i_{1}}$.
The acceptance probability is easily calculated and thus
omitted. After the move has been accepted or rejected (or the do
nothing move) we propose (and accept) a random permutation (all
permutations have uniform probability of being proposed) of the
labels of the parameters (of all the chains).  This final
permutation ensures invariance with respect to the
target.


\subsubsection*{II. Fixed Dimension Crossover}

We begin by reassigning the mixture component labels in each
chain by ordering on the first dimension of the means. When in
state $\theta$, we select a fixed dimension crossover with
probability 1, if it can be selected (i.e.~there are at least
two chains with the same dimensionality) otherwise we select a
`do nothing' move. Select a pair of chains $(i_{1},i_{2})$ with
the same dimensionality, with probability
\begin{eqnarray}
p(i_{1},i_{2}|\theta) & \propto & |\zeta_{i_{1}}-\zeta_{i_{2}}|^{-1}\mathbb{I}_{k_{i_{1}}=k_{i_{2}}}.
\label{selec}
\end{eqnarray}
We select a position $j=1,\dots,k_{i_{1}}-1$ to crossover,
this selection made with probability proportional to $1/j$ and
switch all component specific parameters to the left of $j$
inclusive (note that if the identifiability constraint is not
satisfied in the proposed state of the chain we immediately reject)
and accept/reject on the basis of the Hastings ratio. After the accept/reject
(or the do nothing move) decision has been made, we again propose a random permutation of the
labels of the parameters.

\subsection{Snooker Jumps}
One of the most important ways we can use the information in the
population is by targeting variable dimension jumps by using another
chain. This idea is linked to the snooker algorithm of Gilks et
al.~(1994) and is performed in the following way.  When in state
$\theta$, we select a birth with probability $b(\theta)$, where
\begin{displaymath}
b(\theta) = \left\{\begin{array}{ll}
1 & \textrm{if}~k_{i}=1~\forall \; i \\
0 & \textrm{if}~k_{i}= k_{\textrm{max}}~\forall \; i \\
1/2 & \textrm{otherwise}
\end{array}\right.
\end{displaymath}
then select a chain (the current point $\theta_{c}$) for which a
birth is possible (let $m_{b}(\theta)$ be the number of chains such
that a birth can occur when in state $\theta$) with uniform
probability, and select an anchor point ($\theta_{a}$) with
probability inversely proportional to the absolute value of the
difference between the inverse temperatures. We then generate
$w\sim\mathcal{B}e(1,k_{c})$, with $\mathcal{B}e(\cdot,\cdot)$ the
beta density, and draw a new $\mu ,\Lambda$ pair from:
\begin{eqnarray*}
q(\mu,\Lambda) & = & \sum_{j=1}^{k_{a}}\bar{h}(\eta_{j})
\mathcal{N}_{r}(\mu_{j}^{a},\sigma)\mathcal{IW}(2r+3,\Lambda_{j}^{a})
\end{eqnarray*}
where
\begin{eqnarray*}
\bar{h}(\eta_{j}) & \propto & \frac{1}{k_{c}}\sum_{l=1}^{k_{c}}h(
(\mu_{j}^{a},\Lambda_{j}^{a}),
(\mu_{l}^{c},\Lambda_{l}^{c}))
\end{eqnarray*}
and $h(\cdot,\cdot)$ is the Mahalanobis distance. We then perform
the rest of the move as for the birth in Appendix 1. In the death
move, we perform much the same as for Appendix 1, except we select a
current point with probability $1/m_{d}(\theta)$ (where
$m_{d}(\theta)$ is the number of chains for which a death can occur
when in state $\theta$) and (redundantly) select an anchor point
(which is used in the reverse birth). The birth move is accepted
with probability $\min\{1,A\}$ with:
\begin{eqnarray*}
A & = & \frac{p(y_{1:n}|\eta^{'c},
w_{1:k_c}^{'c},k_{c}+1)^{\zeta_{c}}p(\mu)p(\Phi)p(k_{c}+1)}
{p(y_{1:n}|\eta^{c},w_{1:k_c-1}^{c},k_{c})^{\zeta_{c}}p(k_{c})}
B(k_{c}\delta,\delta)^{-1}w^{\delta-1}(1-w)^
{k_{c}(\delta-1)}\frac{(k_{c}+1)!}{k_{c}!}{}\nonumber \\
& & {}\times
\frac{d(\theta')m_{b}(\theta)}{(k_{c}+1)b(\theta)m_{d}(\theta')}
\frac{(1-w)^{k_{c}-1}}{\mathcal{B}e(w;1,k_{c}) q(\mu,\Phi)}
\end{eqnarray*}
where $\Phi$ is the Cholesky decomposition of $\Lambda$ (see
Appendix 1 for details),
$p(y_{1:n}|\eta^{c},w_{1:k_c-1}^{c},k_{c})^{\zeta_{c}}$ is the
tempered likelihood (for the current point), $B(\cdot,\cdot)$ is the
beta function and $\mathcal{B}e(x;\cdot,\cdot)$ is the beta density
evaluated at $x$. The objective of this move is to propose new
component-specific parameters which are likely to be consistent with
the data, but are markedly different from the current components. It
also provides an adaptive element to the birth proposal, as it
relies on current information, in the population, that is being
continuously updated.

\subsection{Constrained Sampling}

One aspect of population-based simulation that is apparent is the
need to maintain diversity of the population (as in sequential Monte
Carlo - see Del Moral et.~al.~(2006)). In many cases for which it is
difficult to traverse the state space, it is often the case that the
chains at lower temperatures can become trapped (stuck in
local modes) as for single chain MCMC methods, and this may lead to inaccurate Monte Carlo estimates
of quantities of interest.
To avoid this problem, we propose to
constrain some of the members of the population, that is, for some
subset $\mathbb{T}_{l:N}$ ($l\geq 2$), and $i\in \mathbb{T}_{l:N}$, $\pi_{i}$ is a
density constrained to $E_{i}\subset E$.  In the setting of
trans-dimensional problems, a natural choice of sets $E_{i}$ may be
selected with respect to the model dimension
(e.g.~$E_{i}=\bigcup_{k\in\mathcal{K}_{i}} \big(\{k\}\times
E^{k}\big)$ for $\mathcal{K}_{i}\subset\mathcal{K}$ and $i\in \mathbb{T}_{l:N}$).
In general, choosing an appropriate $E_i$ is challenging; see
Atchad\'e \& Liu (2004) (unpublished technical report) for some discussion.
We remark that, in the example in Section \ref{genex},
this technique will prove to be very important.

\subsection{The Algorithm}
\label{alg}
To sample from the augmented distribution we use the algorithm
below; we use the genetic algorithm terminology of Liang \& Wong
(2001).
\begin{flushleft}
0. \emph{Initialise the chain $\theta$}.\\
\emph{For $t=1,\dots,M$ sweep over the following:}\\
1. \emph{Mutation. Select a chain $i\in\{1,\dots,N\}$ with probability $\tau_{i}$ and then
perform one sweep of the reversible jump algorithm in Appendix 1
for this chain (with appropriate modification for constrained targets).}\\
2. \emph{Make a random choice between performing steps 3 or 4.}\\
3. \emph{Crossover. Propose a variable dimension crossover move with
probability $1/2$, else propose a fixed dimension crossover.}\\
4. \emph{Snooker Jump. Propose a birth with probability $b(\boldsymbol{\theta})$,
else propose a death.}\\
5. \emph{Exchange. Perform the delayed rejection exchange move.}
\end{flushleft}
Note that, for constrained chains, we only allow them to be involved
in fixed dimensional crossovers and a special exchange move that
augments step 5; we propose to exchange a constrained and
non-constrained chain, selecting the move only if such a move may be
performed (that is $k_i\in\mathcal{K}_j$ and $k_j\in\mathcal{K}_i$
for $i\neq j$). All selections are made with uniform probability and
no delayed rejection is used.

\section{Gene expression example revisited}
\label{genex}
\subsection{Specification of Simulation Parameters}

\noindent\textbf{Population Size :} To run the population algorithm
in Section \ref{alg} we used $N=25$ with 5 chains constrained. We
recommend a large population size in general (although the
discussion in Section \ref{refereeiswrong} should be considered), so
that results are reasonably similar for separate runs of the
algorithm. See Jasra, Stephens \& Holmes (2006) (unpublished technical
report, available from \texttt{http://stats.ma.ic.ac.uk/das01/}) for more guidance.

\vspace{0.1 in} \noindent\textbf{Temperature Parameters:} For the
main population (the unconstrained chains) the following inverse
temperatures were selected:
\begin{eqnarray*}
\zeta_{1} & = & 1\\
\zeta_{i} & = & \zeta_{i-1} - \varsigma\varphi^{i-1}~i\in\mathbb{T}_{2:20}
\end{eqnarray*}
for constants $\varsigma>0$, $\varphi>1$. We selected
$\varsigma=10^{-6}$, $\varphi=1.85$ ($\zeta_{20}=0.74$). Our choice
of cooling schedule, and population size was based upon pilot
tuning.  
We selected a slowly decreasing sequence of $\zeta$'s since
we observed a poor acceptance rate for the exchange move for
distributions that were further away from each other, as expected.
We found that the inverse temperature at which the reversible jump
algorithm performed best (that is, reasonable acceptance rates along
with regions of high support that were similar to the target) was
$\zeta=0.75$; thus we attempted to include a distribution with this
temperature. We note that we need to be careful when specifying
temperatures, since for low temperatures (low depending on the
problem at hand) the distribution starts to favour dimensionalities
that are small, although this may be alleviated by specifying priors
for $k$ which penalise small values. For this particular problem, we spent
at least 2-3 hours in tuning the parameters; a more automatic procedure may
be implemented - see below.

The term `reasonable acceptance rate' deserves a
quantitative definition. 
A useful criterion put forward by Iba (2001) is that
the expected (wrt target) log Hastings ratio of the exchange probability is equal to 1. This in turn means that, approximately, the algorithm will accept the exchange about half the time. 
We remark that such an approach may be used to provide an automatic
temperature selection and
will vastly
reduce the time spent on tuning the algorithm (as noted by a referee
this aspect should be taken into account in the computational cost and
comparison of the algorithm). We refer the reader to Iba (2001) for the details
and to Goswami \& Liu (2006) (unpublished technical report) for an alternative approach.

\vspace{0.1 in} \noindent\textbf{Constrained Targets:} To select the
$E_i$, $i\in\mathbb{T}_{21:25}$ we used a pilot simulation. We ran the
algorithm with $N=25$ and the inverse temperature parameters
discussed above, only 9 chains in the main population and 16
constrained chains (given inverse temperature parameter 0.999). We
selected the subspaces $E_i$ with respect to the dimensionality,
that is we had 10 chains constrained to lie
$k\in\mathbb{T}_{1:2},\dots,\mathbb{T}_{19:20}$ then six other chains constrained to
lie in $k\in\mathbb{T}_{3:6}$, $\mathbb{T}_{6:9}$, $\mathbb{T}_{9:12}$,
$\mathbb{T}_{12:15}$ and $\mathbb{T}_{15:18}$. This was adopted in order
to determine whether there was any support outside $\mathbb{T}_{3:11}$
found in Section \ref{rjex}. Based upon a short run, the
five constrained chains were taken as $\mathbb{T}_{2:4}$, $\mathbb{T}_{4:6}$,
$\mathbb{T}_{5:7}$ and $\mathbb{T}_{7:9}$, $\mathbb{T}_{9:11}$. The idea of the pilot
tuning is to avoid wasting CPU time on population members
constrained to lie in areas of support that have low density with
respect to the original target density. Additionally, the
constrained chains need to be able to interact with the main
population and the preliminary tuning allows us to make this choice. The
inverse temperature parameters for the constrained chains were
0.999, since we seek to maintain diversity with respect to the
population at colder temperatures. For further discussion in the
setting of partitions, especially in the context of reversible jump,
see Atchad\'e \& Liu (2004) (unpublished technical report).

The algorithm in Section \ref{alg} was run for 1 million sweeps which took approximately 9$\frac{1}{2}$ hours (code is available on request from the first author). A minor modification to algorithm was made:
that if a crossover move was selected, we also propose an exchange for the
chain of interest.

\subsection{Comparison with Vanilla Sampler}

The improvement over the vanilla sampler is
substantial, on average, the chain of interest took 6.75 sweeps to jump
between a mixture model with less than 4 components to a mixture with
more than 6.

The observed inability of the vanilla reversible jump algorithm to move
around the state space (from $k\in\mathbb{T}_{3:5}$ to
$k\in\mathbb{T}_{8:11}$) is not present for the population sampler,
since it may represent both parts of the space simultaneously.
Note that,
due to the complexity of the target, we cannot claim that the sampler has
converged; there may be regions of high posterior density that
are still unexplored.

It is apparent, from simulations, that despite the substantially
improved performance when compared to the vanilla sampler, the sampler has
missed the region $k\in\{8,9,10\}$. This is due to the reasons
discussed earlier, and demonstrates that constrained chains can help
guard against such problems.

The effective sample size (ESS) is a standard measure of the
relative efficiency of an MCMC sampler (see Liu (2001) for full
definition).  For our population algorithm, the ESS for $k$ was
59745 (60000 samples, using a lag of 10 in the autocorrelation
calculation), compared with 2998 and 3591 for both vanilla
algorithms. Taking
\begin{eqnarray*}
\mathsf{E} & = & 2\frac{\textrm{ESS}_{\textrm{pop}}}{M_{\textrm{pop}}T_{\textrm{pop}}}/
\bigg[\frac{\textrm{ESS}_{\textrm{van}_{1}}}{M_{\textrm{van}_{1}}T_{\textrm{van}_{1}}}
+\frac{\textrm{ESS}_{\textrm{van}_{2}}}{M_{\textrm{van}_{2}}T_{\textrm{van}_{2}}}\bigg]
\end{eqnarray*}
where the subscripts refer to the population, vanilla algorithm runs 1 and
2 (the chains run in Section \ref{rjex}) and chains respectively, $M$ is
the sample size and $T$ is the CPU time, we obtain $\mathsf{E}=2.07$
(we note that similar conclusions are drawn when taking into account tuning
time).
Therefore there is little contest between using population-based
reversible jump and the vanilla counterpart for this example; the
population approach is far superior (note that all coincidental
simulation parameters are the same between algorithms).

\subsection{The Efficiency of Sampler Moves}
The exchange move was accepted 44$\%$ of the time at the first stage
and 75$\%$ at the second. This indicates that delayed rejection
helps to ensure that the algorithm is constantly swapping information between
the chains; for 86$\%$ of the sweeps there is at least one exchange.

The snooker and variable dimension crossover moves have acceptance
rate less than $1\%$. That this occurs is to be expected. Liang \&
Wong (2001) report fairly small acceptance rates for their crossover
moves; our rates are smaller as our algorithm operates upon are a more complex
space. Our experience with the snooker and
variable dimension crossover in more
simple examples, is that they are generally not worth the extra
coding effort given their performance. However, we were satisfied
with the fixed dimensional crossover which was accepted $2.9\%$ of
the time. The snooker birth is accepted more often than the standard
birth, but the reverse snooker death move is rarely accepted (c.f.~a
birth move with a proposal that has low variance). Hence the move is
less successful overall than the standard birth. The variable
dimension move acceptance rates (averaged over all chains) were
still below $1\%$, with the split/combine move being less effective.

Overall, our recommendation is that the exchange (with delayed rejection)
and fixed dimensional crossover are implemented. This is in conjunction
with constrained targets.

\subsection{Comparison with Simulated Tempering}

A more appropriate single chain sampler to compare with the
population method is a simulated tempering algorithm; see
Hodgson (1999) for an example of another variable dimension
simulated tempering algorithm. Here the target distribution is:
\begin{eqnarray}
\pi(\theta,\zeta,k|y_{1:n}) & \propto & p(y_{1:n}|\theta,k)^{\zeta}
p(\theta,k)p(\zeta)\label{eq:stpi}
\end{eqnarray}
that is, $\zeta\in Z$ is stochastic
with $Z$ finite and {pseudo prior $p(\zeta)$ (in this
example we set the prior as opposed to constructing one adaptively;
see Geyer \& Thompson (1995)). Note that our formulation does not 
require any normalization constants to be known, although such an
algorithm will allow us to find a `good' pseudo prior; see Geyer \& Thompson
(1995).

To
sample from (\ref{eq:stpi}) we use the reversible jump algorithm
in Appendix 1, conditional upon $\zeta$, and to update $\zeta$ a
delayed rejection move was adopted: propose a temperature
uniformly at random from $Z$ and accept or reject with the Hastings
ratio - if rejected, select an adjacent temperature and perform a
pseudo move that selects to move from the proposed temperature at
the second stage to the proposed temperature at the first stage.

Using some trial simulations we were unable to find a pseudo prior so that for a reasonable number of
temperatures (e.g.~$25$), the algorithm could jump between the target and the inverse temperature 0.75. As a result, for any sensible number of
distributions, the performance of this approach only slightly improves over the vanilla algorithm. An example of a run of the simulated
tempering algorithm was setting $p(\zeta = \zeta_i) \propto1/i$,
with $\zeta_{1}=1$ and having a difference of $1\times 10^{-4}$
between each temperature.  We found, with $|Z|=25$, that the
algorithm only visited the distribution of interest $10\%$ of the
time in a run of 250000 sweeps.

In this Section, it was seen that simulated tempering has been difficult
to set up, so that it can be operated efficiently; see Atachd\'e \& Liu (2004)
(unpublished technical report) for a more automatic procedure.
In addtion, see Zheng (2003) for a theoretical comparison of population MCMC and simulated tempering.


\section{Discussion}

To summarize, in our experience a vanilla reversible jump algorithm
often fails to explore the support of a multimodal model space. We
introduced population-based reversible jump MCMC and gave some
theoretical justification for why these methods can be preferable to
standard MCMC methods. In addition, it was demonstrated that population-based reversible jump
is a means to improving variable dimension simulation. Overall, our
method helps open up the possibility of fully Bayesian analyses in
problems for which simulation is prohibitively slow.  Note that
the basic algorithm (without population moves) can easily be coded
given a vanilla sampler. Therefore population MCMC provides a simple way
to check the performance of MCMC algorithms.

One of the problems of our approach is the limited amount of success
of our crossover moves. Whilst this was observed for
simpler problems in Liang \& Wong (2001), we would still hope that
the population can provide more information when proposing moves. 
General guidelines for constructing
constrained subspaces of $E$ and finding efficient ways to make them
interact with the population is an important area for further
investigation. There are many potential methodological extensions
that may be considered.

Firstly, to combine our approach with adaptive MCMC methods (see Andrieu \& Moulines (2006) and Chauveau \&
Vandekerkhove (2002) in the population context). This is likely to
be superior to standard adaptive algorithms, since there is more
information to update proposals. Furthermore, there is more
information in terms of where the chain has not been, i.e.~we
may search (fewer) regions of the support of $\pi$ for states with
high density.

Secondly, we may consider combining some of our ideas with
many recent stochastic simulation algorithms. For example, we might
use the constrained chains in a similar context to the equi-energy sampler
of Kou et al.~(2006). In the trans-dimensional case, the energy rings 
could be replaced
with dimension rings. This
is likely to produce a highly diverse sample with respect to the
dimensionality.

\section*{Acknowledgement}
The first author was supported by an PhD EPSRC Studentship. We thank
two referees and an associate editor for comments which lead to a substantial improvement
in the content of the paper.
The first author acknowledges
many useful conversations with Nick Heard, Matthew Gander and Zhicheng Zhang.
We also thank Adam Johansen and Tso-Jung Yen for some comments on earlier versions of the paper.

\subsection*{Appendix 1: Reversible Jump Sampler}

The vanilla sampler used in Section \ref{rjex} is now outlined.

One of the drawbacks of the model we have selected is the need for $\Lambda$
to be positive definite. As a result, moves in MCMC
simulation will be difficult to construct such that this constraint is satisfied.
To deal with this problem
we consider the Cholesky decomposition $\Phi$ (see Dellaportas
\& Papageorgiou (2006) for an analysis using the spectral decomposition).
That is,
$\Lambda = \Phi\Phi'$ where
$\Phi$ is lower triangular with positive diagonal elements
(recall the Jacobian is $2^{r}\prod_{l=1}^{r}\phi_{ll}^{r-l+1}$).
Our RJMCMC algorithm is as follows (all moves are Metropolis-Hastings steps
unless otherwise stated).

Firstly the fixed dimensional moves.
The component specific means ($\mu_{j}$) and component
specific lower triangular part of $\Phi_{j}$ are both updated
via an additive cauchy random walk, independent in each dimension. The component
specific diagonals of $\Phi_{j}$ are updated via a multiplicative log-normal random walk, independent in each dimension. The weights are proposed
using an additive normal random walk on the logit scale. Finally, $\Psi$
is generated using a Gibbs kernel; the full
conditional is $\mathcal{W}(2(g+k\alpha'),(2h+
2\sum_{j=1}^{k}\Lambda_{j}^{-1})^{-1})$.

Secondly a birth/death
of a component, largely following Richardson \& Green (1997). Briefly, we draw a new $\mu$ and $\Phi$
from the prior and $w\sim\mathcal{B}e(1,k)$, setting the new weights as
$(w_{1}(1-w),\dots,w_{k}(1-w),w)$, selecting the move with probability $b_{k}$ (when in state $k$). The death, selected
with probability $d_{k}$, is performed by selecting a component to die with uniform probability and inverting the jump function.

Finally, a split/combine of a component. We select a split with probability $s_{k}$ and
choose a component $j^{*}$ uniformly at random to split into components labelled as $(j_{1},j_{2})$. The split requires the following actions:
\begin{quote}
(i) Split the weight by drawing $u_{1}\sim\mathcal{B}e(\gamma,\gamma)$ and set
\begin{eqnarray*}
w_{j_{1}} & = & u_{1}w_{j^{*}}\\
w_{j_{2}} & = & (1-u_{1})w_{j^{*}}.
\end{eqnarray*}
(ii) Split the mean vector by drawing $u_{1(2)},\dots,u_{r(2)}\sim\mathcal{N}(0,\sigma_{\mu})$ and take
\begin{eqnarray*}
\mu_{l(j_{1})} & = & \mu_{l(j^{*})} + u_{l(2)}\\
\mu_{l(j_{2})} & = & \mu_{l(j^{*})} - u_{l(2)}.
\end{eqnarray*}
(iii) Split the off diagonals of $\Phi$ by drawing $u_{21(3)},\dots,u_{r(r-1)(3)}\sim
\mathcal{N}(0,\sigma_{\phi})$ and take
\begin{eqnarray*}
\phi_{lm(j_{1})} & = & \phi_{lm(j^{*})} + u_{lm(3)}\\
\phi_{lm(j_{2})} & = & \phi_{lm(j^{*})} - u_{lm(3)}
\end{eqnarray*}
where $l=2,\dots,r$, $m=1,\dots,l-1$.\\
(iv) Split the diagonals of $\Phi$ by drawing $u_{11(3)},\dots,u_{rr(3)}\sim
\mathcal{LN}(0,\sigma)$ and take
\begin{eqnarray*}
\phi_{ll(j_{1})}  =  \frac{\phi_{ll(j^{*})}}{u_{ll(3)}} ~~~~~
\phi_{ll(j_{2})}  =  \phi_{ll(j^{*})}u_{ll(3)}.
\end{eqnarray*}
\end{quote}
In order to combine we select the move with probability $c_{k}$ and invert the jump function
above. We note that due to the symmetry constraint imposed on the jump function it does not
matter which way we combine the components (see Capp\'e et al.(2003) for details on this).
We choose two components to combine, when in
state $k$, with probability
inversely proportional to the Mahalanobis distance between them, that is:
\begin{eqnarray*}
p_{k}(j_{1},j_{2}) & \propto & \big[(\mu_{j_{1}}-\mu_{j_{2}})'
\Lambda^{-1}_{j_{1}}
(\mu_{j_{1}}-\mu_{j_{2}})
+(\mu_{j_{2}}-\mu_{j_{1}})'
\Lambda^{-1}_{j_{2}}
(\mu_{j_{2}}-\mu_{j_{1}})\big]^{-1}.
\end{eqnarray*}
The split in state $k$ is accepted with probability $\min\{1,A\}$ where
\begin{eqnarray*}
A & = & (\textrm{likelihood ratio})\frac{p(\Phi_{j_{1}})
p(\Phi_{j_{2}})}
{p(\Phi_{j^{*}})}
\frac{p(\mu_{j_{1}})
p(\mu_{j_{2}})}{p(\mu_{j^{*}})}
B(k\delta,\delta)^{-1}\big(w_{j^{*}}u_{1}(1-u_{1})\big)^{\delta-1}
\times {} \\
& & {} \frac{p(k+1)}{p(k)}\frac{(k+1)!}{k!}
\frac{kc_{k+1}p_{k+1}(j_{1},j_{2})}{s_{k}}\frac{|J|}{2q_{1}(u_{1})
q_{2}(u_{2})q_{3}(u_{3})}
\end{eqnarray*}
where $|J|$ is the Jacobian:
\begin{eqnarray*}
|J| & = & 2^{\frac{r(r+3)}{2}}w_{j^{*}}\prod_{l=1}^{r}\frac{\phi_{ll(j^{*})}}
{u_{ll(3)}}
\end{eqnarray*}
and obvious notation for the prior and proposal densities.

The algorithm is performed in a
deterministic sweep over all fixed dimension moves followed by a
random choice of birth/death or split/merge. The particular trans-dimensional move is selected with uniform
probability (assuming we allow a move, i.e.~no birth or split when
$k=k_{\textrm{max}}$ or death or combine when $k=1$).

\subsection*{Appendix 2: Proofs}
\begin{proof}[Proof of Theorem \ref{res1} ]
Our strategy is to show that $\widehat{K}^{N-1}(x_{1:N},A)=(K_M\circ K_E\circ K_M)^{N-1}(x_{1:N},A)$
($A\in\mathcal{E}^N$)
is a uniformly ergodic Markov kernel. Then to prove $\widetilde{K}^{3(N-1)}$
is uniformly ergodic, we may use the fact that $\widetilde{K}^{3(N-1)}(x_{1:N},A)
\geq \tau^{2(N-1)}(1-\tau)^{N-1}\widehat{K}^{N-1}(x_{1:N},A)$ (proved below).
We begin by proving the case $N=2$ and then use an induction on $N$.
The strategy of the proof is to use the uniform ergodicity of $K_{j^{*}}$
(which we take to be $K_{N}$) and the acceptance
of an exchange move. We assume $n_0=1$ as the proof
can be extended to the case $n_0>1$ with only notational complications.
We denote $x_{i}^{(l)}$ as the value of $x_{i}$ after $l$ steps.

We first establish $\widetilde{K}^{3(N-1)}(x_{1:N},A)
\geq \tau^{2(N-1)}(1-\tau)^{N-1}\widehat{K}^{N-1}(x_{1:N},A)$, $A\in\mathcal{E}^N$.
Consider $\widetilde{K}^{3}$, then:
\begin{eqnarray*}
\widetilde{K}^{3}(x_{1:N},A)
& = & \int_{E^{N}}
\widetilde{K}(x_{1:N},dx^{(1)}_{1:N})\int_{E^N}
\widetilde{K}(x_{1:N}^{(1)},dx^{(2)}_{1:N})
\widetilde{K}(x_{1:N}^{(2)},A)\\
& \geq &
\int_{E^{N}}
\widetilde{K}(x_{1:N},dx^{(1)}_{1:N})\int_{E^N}
\widetilde{K}(x_{1:N}^{(1)},dx^{(2)}_{1:N})\tau K_{M}(x_{1:N}^{(2)},A)
\end{eqnarray*}
where we have applied Chapman-Kolmogorov and used
$\widetilde{K}(x_{1:2},\cdot)\geq \tau K_{M} (x_{1:N},\cdot)$
$\forall \; x_{1:N}\in E^{N}$. We then have
\begin{eqnarray}
\widetilde{K}^{3}(x_{1:N},A) & \geq & \tau^{2}(1-\tau)\int_{E^N}
K_{M}(x_{1:N},dx_{1:N}^{(1)})
\int_{E^{N}}
K_{E}(x_{1:N}^{(1)},dx_{1:N}^{(2)})
K_{M}(x_{1:N}^{(2)},A)\label{peq1}
\end{eqnarray}
which corresponds to selecting a mutation followed by
an exchange and then followed again by another mutation. This clearly
follows for $\widetilde{K}^{3(N-1)}(x_{1:N},A)$, by
the argument above:
\begin{eqnarray*}
\widetilde{K}^{3(N-1)}(x_{1:N},A) & \geq & \tau^2(1-\tau)\int_{E^N}
\widetilde{K}^{3(N-2)}(x_{1:N},dx_{1:N}^{3(N-2)})\widehat{K}(x_{1:N}^{3(N-2)},A)\\
& \geq & \tau^4(1-\tau)^2\int_{E^N}
\widetilde{K}^{3(N-3)}(x_{1:N},dx_{1:N}^{3(N-3)})\widehat{K}^2(x_{1:N}^{3(N-3)},A)
\end{eqnarray*}
and we may thus apply the argument above, recursively, to demonstrate the result.
We now drop the
$\tau^2(1-\tau)$ and prove uniform ergodicity of $\widehat{K}$.

Let $N=2$, $A=A_1\times A_2 \in\mathcal{E}\otimes \mathcal{E}$. Using
the fact that $K_M(x_{1:2},\cdot)=K_1(x_{1},\cdot)K_2(x_{2},\cdot)$
and applying the minorization condition for $K_{2}(x_{2}^
{(2)},A_{2})$, the modified equation (\ref{peq1}) becomes
\begin{eqnarray}
\widehat{K}(x_{1:2},A) & \geq &
\int_{E^{2}}
K_{1}(x_{1},dx_{1}^{(1)})
K_{2}(x_{2},dx_{2}^{(1)}) \times{}\nonumber\\
& & {}
\int_{E^2}
\delta_{x_{1}^{(1)}}(dx_{2}^{(2)})
\delta_{x_{2}^{(1)}}(dx_{1}^{(2)})\min\bigg\{1,\frac{\pi_{1}(x_{2}^{(1)})
\pi_{2}(x_{1}^{(1)})}{\pi_{1}(x_{1}^{(1)})
\pi_{2}(x_{2}^{(1)})}\bigg\}\times{}\nonumber\\
& & {}
K_{1}(x_{1}^{(2)},A_{1})\epsilon\nu(A_{2})\label{peq2}
\end{eqnarray}
where we have ignored the rejection of an exchange move. Since
\begin{eqnarray}
\min\bigg\{1,\frac{\pi_{1}(x_{2})\pi_{2}(x_{1})}
{\pi_{1}(x_{1})\pi_{2}(x_{2})}\bigg\} & \geq &
\min\bigg\{1,\frac{\pi_{1}(x_{2})}
{\pi_{2}(x_{2})\varrho_{1}}\bigg\}~\forall \; (x_{1},x_{2})\in E^{2}
\label{peq4}
\end{eqnarray}
and using the measurability of the function, we can split
the integrals in equation (\ref{peq2}) into $I_{1}\times I_{2}$,
where:
\begin{eqnarray*}
I_{1} & = & \int_{E}K_{1}(x_{1},dx_{1}^{(1)})
\int_{E}\delta_{x_{1}^{(1)}}(dx_{2}^{(2)})\\
I_{2} & = & \int_{E^{2}}K_{2}(x_{2},dx_{2}^{(1)})
\delta_{x_{2}^{(1)}}(dx_{1}^{(2)})
\min\bigg\{1,\frac{\pi_{1}(x_{2}^{(1)})}
{\pi_{2}(x_{2}^{(1)})\varrho_{1}}\bigg\}K_{1}(x_{1}^{(2)},A_{1}).
\end{eqnarray*}
Clearly $I_{1}=1$. For $I_{2}$, integrating with respect to Dirac measure
and then applying the minorization condition we obtain:
\begin{eqnarray*}
I_{2} & \geq & \epsilon\int_{E} \nu(dx_{2}^{(1)})\min\bigg\{1,\frac{\pi_{1}(x_{2}^{(1)})}
{\pi_{2}(x_{2}^{(1)})\varrho_{1}}\bigg\}K_{1}(x_{2}^{(1)},A_{1}).
\end{eqnarray*}
Therefore equation (\ref{peq2}) becomes:
\begin{eqnarray}
\widehat{K}(x_{1:2},A) & \geq & \theta\nu^{*}(A)\label{peq3}
\end{eqnarray}
where
\begin{eqnarray*}
\theta & = & \epsilon^{2}\phi\\
\nu^{*}(A) & = & K^{*}(A_{1})\nu(A_{2})\\
K^{*}(A_{1}) & = & \frac{1}{\phi}\int_{E}
\nu(dx_{2}^{(1)})\min\bigg\{1,\frac{\pi_{1}(x_{2}^{(1)})}
{\pi_{2}(x_{2}^{(1)})\varrho_{1}}\bigg\}K_{1}(x_{2}^{(1)},A_{1})\\
\phi & = & \int_{E}
\nu(dx_{2})\min\bigg\{1,\frac{\pi_{1}(x_{2})}
{\pi_{2}(x_{2})\varrho_{1}}\bigg\}
\end{eqnarray*}
Since (\ref{peq3}) holds $\forall \; x_{1:2}\in E^{2}$, $\forall \;
A\in\mathcal{E}\otimes \mathcal{E}$ and the product
measure is non-trivial, we have that $E^{2}$ is $(1,\theta,\nu^{*})$
small, similarly $\widetilde{K}^3$ is uniformly ergodic.

Now suppose $\widehat{K}(x_{2:N},dx_{2:N})$ is uniformly
ergodic (for arbitrary mixture parameters) under the specified
conditions with $E^{N-1}$ $((N-2),\epsilon,\nu)$ small. Let
$\theta_1=\sum_{l=2}^N\varepsilon_{1l}$. For
$A=A_1\times\cdots\times A_{N}=A_{1:N}\in\mathcal{E}^N$, we have
that:
\begin{eqnarray}
\widehat{K}^{(N-1)}(x_{1:N},A) & \geq &
(1-\theta_1)^{N-2}\int_{E^N}\widehat{K}(x_{1:N},dx_{1:N}^{(1)})\widehat{K}_{\theta_1}^{(N-2)}(x_{2:N}^{(1)},A_{2:N})\times \nonumber \\
& & K_1^{2N-3}(x_1^{(1)},A_1) \label{peq3a}
\end{eqnarray}
where $\widehat{K}_{\theta_1}$ denotes that the mixture parameters
in the exchange kernel have been modified by $\theta_1$. Equation
(\ref{peq3a}) refers to the fact that the probability of moving to A
under $\widehat{K}$ for any $x\in E^N$ is greater than considering the
kernel that updates $x_1$ independently of $x_{2:N}$ (i.e.~we can
apply much the same arguments as for demonstrating
$\widetilde{K}^{3(N-1)}(x_{1:N},A) \geq
\tau^{2(N-1)}(1-\tau)^{N-1}\widehat{K}^{N-1}(x_{1:N},A)$). We then have:
\begin{eqnarray*}
\widehat{K}^{(N-1)}(x_{1:N},A) & \geq & \epsilon(1-\theta_1)^{N-2}\int_{E^N}\widehat{K}(x_{1:N},dx_{1:N}^{(1)})
\nu(A_{2:N})K_1^{2N-3}(x_1^{(1)},A_1).
\end{eqnarray*}
At this point we may apply the above arguments to yield that $E^N$ is $(N-1,\theta,\nu^*)$ small for
appropriate $\theta>0$ and probability measure $\nu^*$. Thus the result follows by induction
and the fact that $\widetilde{K}^{3(N-1)}(x_{1:N},A)
\geq \tau^{2(N-1)}(1-\tau)^{N-1}\widehat{K}^{N-1}(x_{1:N},A)$.
\end{proof}

\begin{proof}[Proof of Proposition \ref{prop1}]
Our approach is to focus upon the  Dobrushin coeficient of
$K_M\circ K_E$.
Consider $N=2$, $f:E^2\rightarrow[0,1]$:
\begin{displaymath}
|K_M\circ K_E(f)(u_{1:2}) - K_M\circ K_E(f)(v_{1:2})| \leq
\bigg|\int_{E^4}\rho_{1,2}(x_1,x_2)f(x_{1:2}')\delta_{x_1}(dx_2')\delta_{x_2}(dx_1')
K_{M}(u_{1:2},dx_{1:2})
\end{displaymath}
\begin{displaymath}
- \int_{E^4}\rho_{1,2}(x_1,x_2)f(y_{1:2}')\delta_{x_1}(dy_2')\delta_{x_2}(dy_1')K_{M}(v_{1:2},dx_{1:2})\bigg| +
\bigg|\int_{E^4}[1-\rho_{1,2}(x_1,x_2)]f(x_{1:2}')
\end{displaymath}
\begin{displaymath}
\delta_{x_1}(dx_1')\delta_{x_2}(dx_2')K_{M}(u_{1:2},dx_{1:2}) -
\int_{E^4}[1-\rho_{1,2}(x_1,x_2)]f(x_{1:2}')\delta_{x_1}(dy_1')\delta_{x_2}(dy_2')K_{M}(v_{1:2},dx_{1:2})\bigg|
\end{displaymath}
\begin{eqnarray*}
& \leq & [\textrm{osc}(f\rho_{1,2}) + \textrm{osc}(f[1-\rho_{1,2}])]\beta(K_M)\\
& \leq & 2\textrm{osc}(\rho_{1,2})\beta(K_M)
\end{eqnarray*}
where we have used the fact
that $\textrm{osc}(fg)\leq \sup(f)\textrm{osc}(g)$ (since $\inf(f)= 0$ and
$\sup(f)=1$). Note:
\begin{eqnarray*}
\textrm{osc}(\rho_{1,2}) & = & 1-\inf_{(x_1,x_2)\in E^2}\rho_{1,2}(x_1,x_2).
\end{eqnarray*}

Now, since we have the mixing condition on $K_M$ we have that:
\begin{eqnarray*}
\beta(K_M\circ K_E) & \leq & 2(1-\inf_{(x_1,x_2)\in E^2}\rho_{1,2}(x_1,x_2))(1-\epsilon).
\end{eqnarray*}
Since
\begin{eqnarray*}
\|\eta(K_M\circ K_E)^n - \pi^*\|_{TV} & \leq & \beta((K_M\circ K_E)^{n})\|\eta - \pi^*\|_{TV}
\end{eqnarray*}
(e.g.~Del Moral (2004) chapter 4)
we easily yield the desired result, for $N=2$. The result for $N\geq 3$ can
be proved using the above arguments, with only notational complications.
\end{proof}

\normalsize

\vspace{0.2 in}

{\ \nocite{*} \centerline{ REFERENCES}
\begin{list}{}{\setlength{\itemindent}{-0.3in}}
\item
{\sc Andrieu,} C. \& {\sc Moulines} E.~(2006).
On the ergodicity properties of some adaptive MCMC algorithms. \emph{Ann. Appl. Prob.}. {\bf 16}, 1462--1505.
\item
{\sc Bozdech}, Z., {\sc Llinas}, M., {\sc Pullium}, B. L.,
{\sc Wong}, E. D., {\sc Zhu}, J. \& {\sc DeRisi}, J. L. ~(2003).
The transcriptome of the intraerythrocytic developmental cycle
of \emph{Plasmodium falciparum}. {\it PLoS Biol.} {\bf 1}, 1--16.
\item
{\sc Brooks}, S. P., {\sc Giudici}, P. \& {\sc Roberts}, G. O.~(2003).
Efficient construction of reversible jump Markov chain
Monte Carlo proposal distributions (with Discussion). {\it J. R.
Statist. Soc. B} {\bf 65}, 1--37.
\item
{\sc Capp\'e}, O., {\sc Robert}, C. P. \& {\sc Ryd\'en}, T.~(2003).
Reversible jump, birth-and-death and more general continuous time
Markov chain Monte Carlo samplers. {\it J. R.
Statist. Soc. B} {\bf 65}, 679--700.
\item
{\sc Carlin}, B. \& {\sc Chib}, S.~(1995).
Bayesian model choice via Markov chain Monte Carlo methods. {\it J. R.
Statist. Soc. B} {\bf 57}, 473--84.
\item
{\sc Chauveau}, D. \& {\sc Vandekerkhove}, P.~(2002).
Improving convergence of the Hastings and Metropolis algorithm with an adaptive
proposal. {\it Scand. J. Statist.}, {\bf 29}, 13--29.
\item
{\sc Del Moral}, P.~(2004). \textit{Feynman-Kac Formulae: Genealogical and
Interacting Particle Systems with Applications}. Springer: New York.
\item
{\sc Del Moral}, P., {\sc Doucet}, A., \& {\sc Jasra}, A.~(2006).
Sequential Monte Carlo samplers.
{\it J. R. Statist. Soc. B}, \textbf{68}, 411-36.
\item
{\sc Dellaportas}, P. \& {\sc Papageorgiou}, I.~(2006).
Multivariate mixtures of normals with an unknown number of components.
{\it Statist. Comp.}, {\bf 16}, 57--68.
\item
{\sc Denison}, D. G. T., {\sc Holmes}, C. C., {\sc Mallick}, B. K.
\& {\sc Smith}, A. F. M. (2002). \textit{Bayesian Methods for
Nonlinear Classification and Regression}. Chichester: Wiley.
\item
{\sc Diaconis}, P. \& {\sc Saloff-Coste}, L.~(1993).
Comparison theorems for reversible Markov chains.
{\it Ann. Appl. Prob}, {\bf 3}, 696--750.
\item
{\sc Dobrushin}, R. L.~(1956).
Central limit theorem for non-stationary Markov chains I, II. {\it Theory Probab. Appl.}, {\bf 1}, 65--80, 329--83.
\item
{\sc Geyer}, C.~(1991),
Markov chain maximum likelihood, In
{\it Computing Science and Statistics: The 23rd symposium on the interface}, (E. Keramigas ed), 156-63
Fairfax: Interface Foundation.
\item
{\sc Geyer}, C. \& {\sc Thompson}, E. A.~(1995).
Annealing Markov chain Monte Carlo with applications
to ancestral inference. {\it J. Am.
Statist. Assoc.} {\bf 90}, 909--20.
\item
{\sc Gilks}, W. R., {\sc Roberts}, G. O.  \& {\sc George}, E. I.~(1994).
Adaptive direction sampling. {\it The Statistician} {\bf 43},
179--89.
\item
{\sc Godsill}, S.~(2001).
On the relationship between MCMC methods for uncertainty. {\it J. Comput.
Graph. Statist.} {\bf 10}, 230--48.
\item
{\sc Green}, P. J.~(1995).
Reversible jump Markov chain Monte Carlo computation
and Bayesian model determination. {\it Biometrika} {\bf 82},
711--32.
\item
{\sc Green}, P. J.~(2003a).
Discussion of Efficient construction of reversible jump Markov chain
Monte Carlo proposal distributions. {\it J. R. Statist. Soc. B}, {\bf 65}, 48-9.
\item
{\sc Green}, P. J.~(2003b).
Trans-dimensional Markov chain Monte Carlo. In
{\it Highly Structured Stochastic Systems}, (P.J. Green, N.L. Hjort \& S.
Richardson eds), 179-96
Oxford: Oxford University Press.
\item
{\sc Green}, P. J. \& {\sc Mira}, A.~(2001).
Delayed rejection in reversible jump Metropolis-Hastings. {\it Biometrika}
{\bf 88},
1035--53.
\item
{\sc Hastings}, W. K.~(1970).
Monte Carlo sampling methods using Markov chains and
their applications. {\it Biometrika} {\bf 57},
97--109.
\item
{\sc Heard}, N. A., {\sc Holmes}, C. C., \& {\sc Stephens}, D. A.~(2006).
A quantitative study of gene regulation involved
in the immune response of anopheline mosquitoes:
An application of Bayesian hierarchical clustering
of curves. {\it J. Amer. Statis. Assoc.}, {\bf 101}, 18--29.
\item
{\sc Hodgson}, M. E. A.~(1999).
A Bayesian restoration of an ion channel signal. {\it J. R.
Statist. Soc. B} {\bf 61}, 95--114.
\item
{\sc Hukushima}, K. \& {\sc Nemoto}, K.~(1996).
Exchange Monte Carlo method and application to spin
glass simulations. {\it J. Phys. Soc. Japan} {\bf 65},
1604-08.
\item
{\sc Iba}, Y.~(2001).
Extended ensemble Monte Carlo. {\it Int. J. Mod. Phys.} {\bf 12},
653-56.
\item
{\sc Jasra}, A., {\sc Holmes}, C. C., \& {\sc Stephens}, D. A.~(2005).
Markov chain Monte Carlo methods and the label switching problem in Bayesian
mixture modelling. {\it Statist.~Sci.} {\bf 20}, 50--67.
\item
{\sc Jennison}, C., {\sc Hurn}, M. A. \& {\sc Al-Awadhi}, F.~(2003).
Discussion of Efficient construction of reversible jump Markov chain
Monte Carlo proposal distributions. {\it J. R. Statist. Soc. B}, {\bf 65}, 44-5.
\item
{\sc Kou}, S. C., {\sc Zhou}, Q. \& {\sc Wong}, W.~H.~(2006).
Equi-energy sampler with applications in statistical inference and
statistical mechanics. {\it Ann. Statist.}, {\bf 34}, 1581--1619.
\item
{\sc Liang}, F. \& {\sc Wong}, W. H.~(2001).
Real parameter evolutionary Monte Carlo with applications to Bayesian mixture models. {\it J. Am.
Statist. Assoc.} {\bf 96}, 653--66.
\item
{\sc Liu}, J. S.~(2001). \textit{Monte Carlo Strategies in Scientific Computing}. New York: Springer.
\item
{\sc Madras}, N. \& {\sc Zheng}, Z.~(2003).
On the swapping algorithm. {\it Random Structures and Algorithms}
{\bf 22}, 66--97.
\item
{\sc McLachlan}, G. J. \& {\sc Peel}, D.~(2000). \textit{Finite Mixture Models}. Chichester: Wiley.
\item
{\sc Mengersen}, K. L. \& {\sc Tweedie}, R. L.~(1996).
Rates of convergence of the Hastings and Metropolis algorithms.
{\it Ann. Statist.} {\bf 24}, 101--21.
\item
{\sc Metropolis}, N., {\sc Rosenbluth}, A. W., {\sc Rosenbluth}, M. N.,
{\sc Teller}, A. H. \& {\sc Teller}, E.~(1953).
Equations of state calculations by fast computing machines. {\it J. Chem.
Phys.} {\bf 21}, 1087--92.
\item
{\sc Richardson}, S. \& {\sc Green}, P. J.~(1997).
On Bayesian analysis of mixture models with an unknown number of components
(with Discussion). {\it J. R.
Statist. Soc. B} {\bf 59}, 731--92.
\item
{\sc Robert}, C. P. \& {\sc Casella} G.~(2004). \textit{Monte Carlo Statistical
Methods}. Second edition. New York: Springer.
\item
{\sc Roberts}, G. O. \& {\sc Rosenthal}, J. S.~(2004).
General state space Markov chains and MCMC algorithms. {\it Prob. Surveys},
{\bf 1}, 20--71.
\item
{\sc Stephens}, M.~(2000).
Bayesian analysis of mixture models with an unknown number of components - an alternative to reversible jump methods. {\it Ann. Statist.} {\bf 28}, 40--74.
\item
{\sc Tierney}, L.~(1994).
Markov chains for exploring posterior distributions (with Discussion). {\it Ann. Statist.} {\bf 22}, 1701--62.
\item
{\sc Zheng}, Z.~(2003).
On swapping and simulated tempering algorithms. {\it Stoch. Proc. Appl.}
{\bf 104}, 131--53.
\end{list}
}

\end{document}